\newcommand{\R}{\mathbb{R}}
\newcommand{\I}{1{\hskip -2.5 pt}\hbox{I} }
\newcommand{\Be}[2]{\mbox{Be}(#1,#2)}
\newcommand{\mockalph}[1]{}
\newtheorem{lemma}{Lemma}
\title{Multiscale Bernstein polynomials for densities}
\author{Antonio Canale\thanks{Dipartimento di Scienze Economico-Sociali e Matematico-Statistiche, Universit\`a degli studi di Torino e Collegio Carlo Alberto, Torino, Italy$\,$ \tt antonio.canale@unito.it } $\,\,$and 
David B. Dunson\thanks{Department of Statistical Science, Duke University,
Durham, NC \tt dunson@duke.edu}
}
\begin{document}

\maketitle

\abstract{
Our focus is on constructing a multiscale nonparametric prior for densities.  The Bayes density estimation literature is dominated 
by single scale methods, with the exception of Polya trees, which favor overly-spiky densities even when the truth is smooth.
We propose a multiscale Bernstein polynomial family of priors, which produce smooth realizations that do not rely on hard
partitioning of the support.  At each level in an infinitely-deep binary tree, we place a beta dictionary density; within a scale
the densities are equivalent to Bernstein polynomials.  Using a stick-breaking characterization, stochastically decreasing weights are 
allocated to the finer scale dictionary elements.  A slice sampler is used for posterior computation, and properties are described.
The method characterizes densities with locally-varying smoothness, and can produce a sequence of coarse to fine density estimates. 
An extension for Bayesian testing of group differences is introduced and applied to DNA methylation array data. 
}

{\center \textbf{Keywords: }}
Density estimation; Multiresolution; Multiscale clustering; Multiscale testing; Nonparametric Bayes; Polya tree; Stick-breaking; Wavelets

\newpage

\section{Introduction}

Multiscale estimators have well known advantages, including the ability to characterize abrupt local changes and to provide a compressed estimate to a desired level of resolution.  Such advantages have lead to enormous popularity of wavelets, which are routinely used in signal and image processing, and have had attention in the literature on density estimation.  \citet{dono:etal:1996} developed a wavelet thresholding approach for density estimation, which has minimax optimality properties, and there is a literature developing modifications for deconvolution problems \citep{pens:vida:1999}, censored data \citep{niu:2012}, time series \citep{garc:barr:2012} and other settings.  \citet{lock:pete:2013} proposed an approach, which can better characterize local symmetry and other features commonly observed in practice, using multiwavelets.  \citet{chen:etal:2012} instead use geometric multiresolution analysis methods related to wavelets to obtain estimates of high-dimensional distributions having low-dimensional support.

Although there is a rich Bayesian literature on multiscale function estimation \citep{abra:etal:1998, clyd:etal:1998, clyd:geor:2000, wang:etal:2007}, there has been limited consideration of  Bayesian multiscale density estimation.  Popular methods for Bayes density estimation rely on kernel mixtures.  For example, Dirichlet process mixtures are applied routinely.  By using location-scale mixtures, one can accommodate varying smoothness, with the density being flat in certain regions and concentrated in others.  However, Dirichlet processes lack the appealing multiscale structure.  Polya trees provide a multiscale alternative \citep{maul:etal:1992,lavi:1992a,lavi:1992b}, but have practical disadvantages.  They tend to produce highly spiky density estimates even when the true density is smooth, and have sensitivity to a pre-specified partition sequence.  This sensitivity can be ameliorated by mixing Polya trees \citep{hans:2002}, but at the expense of more difficult computation.  

Our focus is on developing a new approach for Bayesian multiscale density estimation, which inherits many of the advantages of Dirichlet process mixtures while avoiding the key disadvantages of Polya trees.  We want a framework that is easily computable, has desirable multiscale approximation properties, allows centering on an initial guess at the density, and can be extended in a straightforward manner to include covariates and allow embedding within larger models. We accomplish this using a multiscale extension of mixtures of Bernstein polynomials \citep{petr:1999a, petr:1999b}, which have been shown to have appealing asymptotic properties in the single scale case \citep{petr:wass:2002,ghos:2001}.

In the next section, our multiscale prior for densities is introduced and properties are discussed. Section 3 introduces posterior computation via a slice sampling algorithm. In Section 4 the performance of the method in terms of density estimation is evaluated via a simulation study. Section 5 discusses generalizations, with particular emphasis on Bayesian multiscale inferences on differences between groups. Section 6 applies the method to a DNA methylation array dataset on breast cancer, and Section 7 concludes. Proofs and computational details are reported in the Appendix.  

\section{Multiscale priors for densities}

\subsection{Proposed model}
\label{sec:model}

Let $x \in \mathcal{X} \subset \R$ be a random variable having density $g$ with respect to Lebesgue measure.  Assume that $g_0$ is a prior guess for $g$, with $G_0$ and $G_0^{-1}$ the corresponding cumulative distribution function (CDF) and inverse CDF, respectively.  We induce a prior $g \sim \Pi$ centered on $g_0$ through a prior for the density $f$ of $y = G_0( x ) \in (0,1)$.  The CDFs $F$ and $G$ corresponding to the densities $f$ and $g$, respectively, have the following relationship
\begin{eqnarray}
G(x) = F\{ G_0(x) \}, x \in \mathcal{X},\quad 
F(y) = G\{ G_0^{-1}(y) \}, y \in (0,1). \label{eq:map}
\end{eqnarray}
We assume that $f$ follows a multiscale mixture of Bernstein polynomials, 
\begin{equation}
	f(y) = \sum_{s=0}^\infty \sum_{h=1}^{2^s} \pi_{s,h} \mbox{Be}(y; h, 2^s - h +1),
\label{eq:mix1}
\end{equation}
where Be($a$, $b$) denotes the beta density with mean $a/(a+b)$, and $\{ \pi_{s,h} \}$ are random weights drawn from a suitable stochastic process.
We introduce an infinite sequence of scales $s=0,1,\ldots,\infty$.  At scale $s$, we include $2^s$ Bernstein polynomial basis densities.  The framework can be represented as a binary tree in which  each layer is indexed by a scale and each node is a suitable beta density. 
For example, at the root node, we have the Be(1,1) density which generates two daughters Be(1,2) and Be(2,1) and so on. In general, let $s$ denote the scale and $h$ the polynomial within the scale. The node $(s,h)$ in the tree is related to the Be($h, 2^s - h +1$) density. A cartoon of the binary tree is reported in Figure~\ref{tree1}.

\begin{figure}
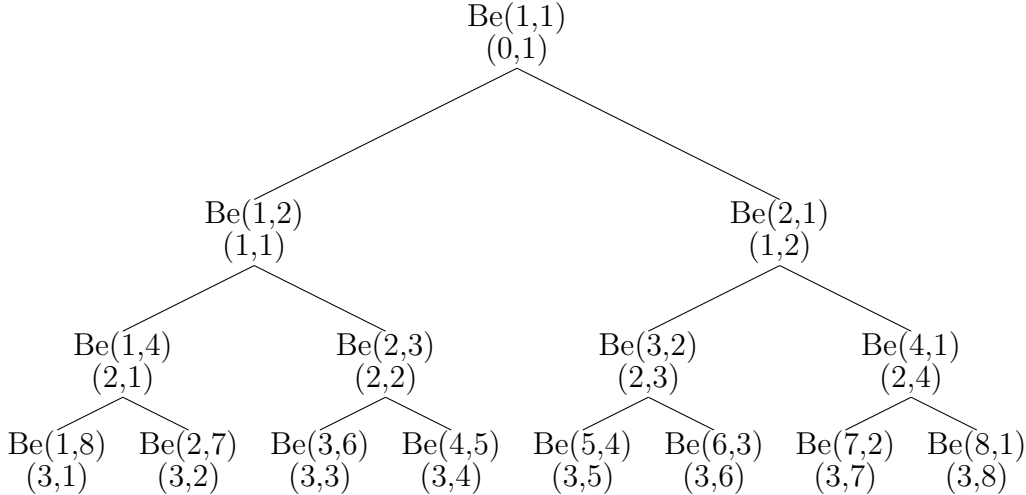

\label{tree1}
\Tree [.Be(1,1)\\(0,1)
		[.Be(1,2)\\(1,1) 
			[.Be(1,4)\\(2,1) 
				[.Be(1,8)\\(3,1)  ] [.Be(2,7)\\(3,2)  ] ] 
			[.Be(2,3)\\(2,2) 
				[.Be(3,6)\\(3,3)  ] [.Be(4,5)\\(3,4)  ] ] ] 
		[.Be(2,1)\\(1,2) 
			[.Be(3,2)\\(2,3) 
				[.Be(5,4)\\(3,5)  ] [.Be(6,3)\\(3,6)  ]  ] 
			[.Be(4,1)\\(2,4) 
				[.Be(7,2)\\(3,7)  ] [.Be(8,1)\\(3,8)  ]  ] ]    
	]
\caption{Binary tree with beta kernels at each node $(s, h)$, where $s$ is the scale level and $h$ is the index within the scale}
\end{figure}

A prior measure for the multiscale mixture \eqref{eq:mix1} is obtained by specifying a stochastic process for the infinite dimensional set of weights $\{\pi_{s,h}\}$. To this end we introduce, for each scale $s$ and node $h$ within the scale, independent random variables
\begin{equation}
S_{s,h} \sim \Be{1}{a},\quad  R_{s,h} \sim \Be{b}{b},
\label{eq:SR}
\end{equation}
corresponding to the probability of stopping and taking the right path conditionally on not stopping, respectively. Define the weights as 
\begin{equation}
	\pi_{s,h} = S_{s,h} \prod_{r<s} (1-S_{r,g_{shr}}) T_{shr}
\label{eq:weights}
\end{equation}
where $g_{shr} = \lceil h/2^{s-r} \rceil$ is the node traveled through at scale $r$ on the way to node $h$ at scale $s$, $T_{shr} = R_{r,g_{shr}}$ if $(r+1,g_{shr+1})$ is the right daughter of node $(r,g_{shr})$, and $T_{shr} = 1-R_{r,g_{shr}}$ if $(r+1,g_{shr+1})$ is the left daughter of $(r,g_{shr})$. For binary trees, there is a unique path leading from the root node to node $(s,h)$, and $\cal T$ denotes the infinite deep binary tree of the weights \eqref{eq:weights}.  We refer to the prior  resulting from \eqref{eq:mix1}--\eqref{eq:weights} as a multiscale Bernstein polynomial (msBP) prior and we write $f \sim \mbox{msBP}(a,b)$.  The choice for the hyperparameters are discussed in the next section.

The infinite tree of probability weights is generated from a generalization of the stick-breaking process representation of the Dirichlet process \citep{art:seth:1994}.  Each time the stick is broken, it is consequently  randomly divided in two parts (one for the probability of going right, the remainder for the probability of going left) before the next break. 
An alternative treed stick-breaking process is proposed by \citet{adam:etal:2010} where a first stick-breaking process defines the vertical growth of an infinitely wide tree and a second puts weights on the infinite number of descendant nodes.

Sampling a random variable $y$ from a random density, which is generated from a msBP prior, can be described as follows. At node $(s,h)$, generate a random probability $S_{s,h} \sim \Be{1}{a}$ corresponding to the probability of stopping at that node given you passed through that node, and $R_{s,h} \sim \Be{b}{b}$  corresponding to the probability of taking the right path in the tree in moving to the next finer scale given you did not stop at node $(s,h)$. Conditionally on being at the node $(s,h)$ we assume that $y \sim \mbox{Be}(y; h, 2^s-h+1)$. Algorithm~1 describes how to generate $y$ from an msBP density.

\begin{algorithm}
\caption{Generating a draw from a random density having an msBP prior}
\begin{algorithmic}
\footnotesize
\STATE \texttt{loop} = TRUE;
\STATE $s = 0$, $h=1$;
\WHILE{\texttt{loop}}
\STATE let \texttt{loop} = FALSE with probability $S_{s,h}$.
\IF{\texttt{loop}}
	\STATE with probability $R_{s,h}$, let $h = 2h$
	\STATE with probability $1-R_{s,h}$, let $h = 2h-1$
\ENDIF
\ENDWHILE
\STATE generate $y \sim \mbox{Be}(h, 2^{s} - h +1)$.
\end{algorithmic}
\end{algorithm}

\subsection{Basic properties} 
\label{sec:properties}

In this section we study basic properties of the proposed prior. A first requirement is that the construction  leads to a meaningful sequence of weights. The next lemma shows that the random weights on each node of the infinitely deep tree sum to one almost surely.

\begin{lemma}
\label{lem:sumtoone}
Let $\pi_{s,h}$ be an infinite sequence of weights defined as in \eqref{eq:SR}--\eqref{eq:weights}. Then,
\begin{equation}
	 \sum_{s=0}^\infty \sum_{h=1}^{2^s} \pi_{s,h} = 1 
\label{eq:sumtoone}
\end{equation}
almost surely for any $a,b>0$.
\end{lemma}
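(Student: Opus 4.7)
The plan is to interpret $\pi_{s,h}$ as the probability of the generative Algorithm~1 stopping exactly at node $(s,h)$, and reduce the claim to showing that the algorithm terminates almost surely. For convenience let $Q_{s,h}$ be the probability (conditional on the $S$'s and $R$'s) of reaching node $(s,h)$ without stopping at an ancestor, so that $Q_{0,1}=1$ and the recursion
\begin{equation*}
Q_{s+1,2h-1}=Q_{s,h}(1-S_{s,h})(1-R_{s,h}),\qquad Q_{s+1,2h}=Q_{s,h}(1-S_{s,h})R_{s,h}
\end{equation*}
holds. By unrolling this recursion along the unique path from the root to $(s,h)$ one recovers $\pi_{s,h}=Q_{s,h}S_{s,h}$, matching \eqref{eq:weights}.

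Next I would introduce the ``surviving mass'' at scale $s$, $V_s=\sum_{h=1}^{2^s}Q_{s,h}$, which represents the probability that Algorithm~1 has not yet stopped by the time it enters scale $s$. The recursion above gives $V_{s+1}=\sum_h Q_{s,h}(1-S_{s,h})\le V_s$, so $V_s$ is nonnegative and monotonically decreasing, and also
\begin{equation*}
\sum_{s=0}^{N}\sum_{h=1}^{2^s}\pi_{s,h}=\sum_{s=0}^{N}(V_s-V_{s+1})=1-V_{N+1}.
\end{equation*}
Thus establishing \eqref{eq:sumtoone} is equivalent to proving $V_s\to 0$ almost surely.

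For the expectation bound, observe that $Q_{s,h}$ is a measurable function of the $S_{r,\cdot}$ and $R_{r,\cdot}$ with $r<s$, hence independent of $S_{s,h}$. Since $E(1-S_{s,h})=a/(1+a)$ for all $(s,h)$, linearity of expectation yields
\begin{equation*}
E(V_{s+1})=\sum_{h=1}^{2^s}E(Q_{s,h})\,E(1-S_{s,h})=\frac{a}{1+a}\,E(V_s),
\end{equation*}
so by induction $E(V_s)=\{a/(1+a)\}^{s}$, which tends to $0$ for every $a>0$.

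Finally, since $\{V_s\}$ is nonincreasing and bounded in $[0,1]$, it converges almost surely to some $V_\infty\ge 0$; dominated convergence together with $E(V_s)\to 0$ forces $E(V_\infty)=0$ and therefore $V_\infty=0$ almost surely. This yields $\sum_{s,h}\pi_{s,h}=1-V_\infty=1$ a.s., independently of $b$, completing the proof. The only step that requires any care is the independence argument used to factor $E(V_{s+1})$; beyond that the result is essentially a telescoping-plus-monotone-convergence computation.
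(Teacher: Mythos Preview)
Your argument is correct. Both you and the paper identify the partial-sum remainder $1-\sum_{s\le N}\sum_h\pi_{s,h}$ with the total surviving mass at the next level (your $V_{N+1}$, the paper's $\Delta_N$), but the two proofs then diverge. The paper bounds $\Delta_N\le 2^N\max_h\prod_r(1-S_{r,\cdot})T_{r,\cdot}$, passes to logarithms, and appeals to Kolmogorov's three-series theorem together with Jensen's inequality to drive each path's log-sum to $-\infty$. You instead exploit the pointwise monotonicity $V_{s+1}\le V_s$: a bounded nonincreasing sequence has an almost-sure limit, and the one-line expectation recursion $E(V_{s+1})=\tfrac{a}{1+a}\,E(V_s)$ plus dominated convergence pins that limit at zero. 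Your route is strictly more elementary---no three-series theorem, no need to control a maximum over $2^N$ dependent paths---and it makes transparent why $b$ plays no role (the left/right factors cancel in $Q_{s+1,2h-1}+Q_{s+1,2h}=Q_{s,h}(1-S_{s,h})$, so $V_s$ does not involve the $R$'s at all). The paper's approach is pathwise in flavor and uses heavier machinery; yours is a global expectation-plus-monotonicity computation.
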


The total weight placed on a scale $s$ is controlled by the prior for $S_{s,h}$. The expected probability allocated to node $h$ at scale $s$ can be expressed as 
\begin{eqnarray}
\mbox{E}(\pi_{s,h}) & = & \mbox{E}\bigg\{ S_s \prod_{l=0}^{s-1} (1-S_l) \prod_{l=1}^s T_l \bigg\} \nonumber \\
& = & \bigg( \frac{1}{1 + a} \bigg) \bigg( \frac{a}{1+a} \bigg)^s \bigg( \frac{1}{2} \bigg)^s = \frac{1}{1+a} \bigg( \frac{a}{2 + 2a} \bigg)^s, 
\label{eq:mean}
\end{eqnarray}
where we discard the $h$ subscript on $S_l \sim \mbox{Be}(1,a)$ and $T_l \sim \mbox{Be}(b,b)$ for ease in notation.  This does not impact the calculation because any path taken up to scale $s$ has the same probability {\em a priori} and the random variables in \eqref{eq:SR} have the same distribution regardless of the path that is taken.  Similarly 
\[
\mbox{E}(\pi_{s,h}^2)  =  \mbox{E}\bigg\{ S_s^2 \prod_{l=0}^{s-1} (1-S_l)^2 \prod_{l=1}^s T_l^2 \bigg\}, 
 =  \frac{2}{(1 + a)(2+a)} \bigg( \frac{a}{2+a} \bigg)^s \bigg\{ \frac{b+1}{2(2b+1)} \bigg\}^s.
\]
Hence at scale $s=0$ the variance is 
$\mbox{Var}(\pi_{0,1}) =  a/\{(2+a)(1 + a)^2\}$, 
while for $s>0$
\begin{eqnarray}
\mbox{Var}(\pi_{s,h}) & = &  \frac{2}{(1 + a)(2+a)} \bigg( \frac{a}{2+a} \bigg)^s \bigg\{ \frac{b+1}{2(2b+1)} \bigg\}^s - 
\left\{\frac{1}{1+a} \bigg( \frac{a}{2 + 2a} \bigg)^s \right\}^2.
\label{eq:variance}
\end{eqnarray}

We can additionally verify that our prior for the CDF $G$ is centered on the chosen $G_0$.  Letting $F(A) = \int_A f$, we obtain $E\{F(A)\} = \lambda(A)$, where $\lambda(A)$ is the Lebesgue measure over the set $A$. Details are reported in the Appendix.  Hence, the prior for the density of $y$ is automatically centered on a uniform density on $[0,1]$. 
This is the desired behavior as $y \sim \mbox{Unif}(0,1)$ with $x = G_0^{-1}( y )$ implies that $x \sim g_0$, which is our prior guess for the observed data density.  In addition, from (\ref{eq:map}), $\mbox{E}\{ F(y) \} = y$ implies  
\[
\mbox{E}[ G\{ G_0^{-1}(y) \}] = y = \mbox{E}\{ G( x ) \} = G_0(x),
\]
so that the prior expectation for the CDF $G$ is $G_0$ as desired.
	
From equation \eqref{eq:mean} and \eqref{eq:variance}, the hyperparameter $a$ controls the decline in probabilities over scales. In general, letting $S^{(i)}$ denote the scale at which the $i$th observation falls, we have 
\[
	E(S^{(i)}) = \sum_{s=0}^\infty s \frac{1}{1+a} \bigg( \frac{a}{2 + 2a} \bigg)^s = a.
\]
Hence, the value of $a$ is the expected scale from which observations are drawn. For small $a$, high probability is placed on coarse scales, leading to smoother densities, with $a \to 0$ inducing $\pi_{0,1}=1$ and hence $f(y)$ uniform.  As $a$ increases, finer scale densities will be weighted higher, leading to spiker realizations.  To illustrate this, Figure~\ref{fig:realizations1} shows realizations from the prior for different $a$ values. To better isolate the contribution of the $a$ hyperparameter, we fixed the realizations of $R_{s,h} \sim \Be{1}{1}$ for all subplots. 

\begin{figure}
\begin{center}
\includegraphics[scale=.8]{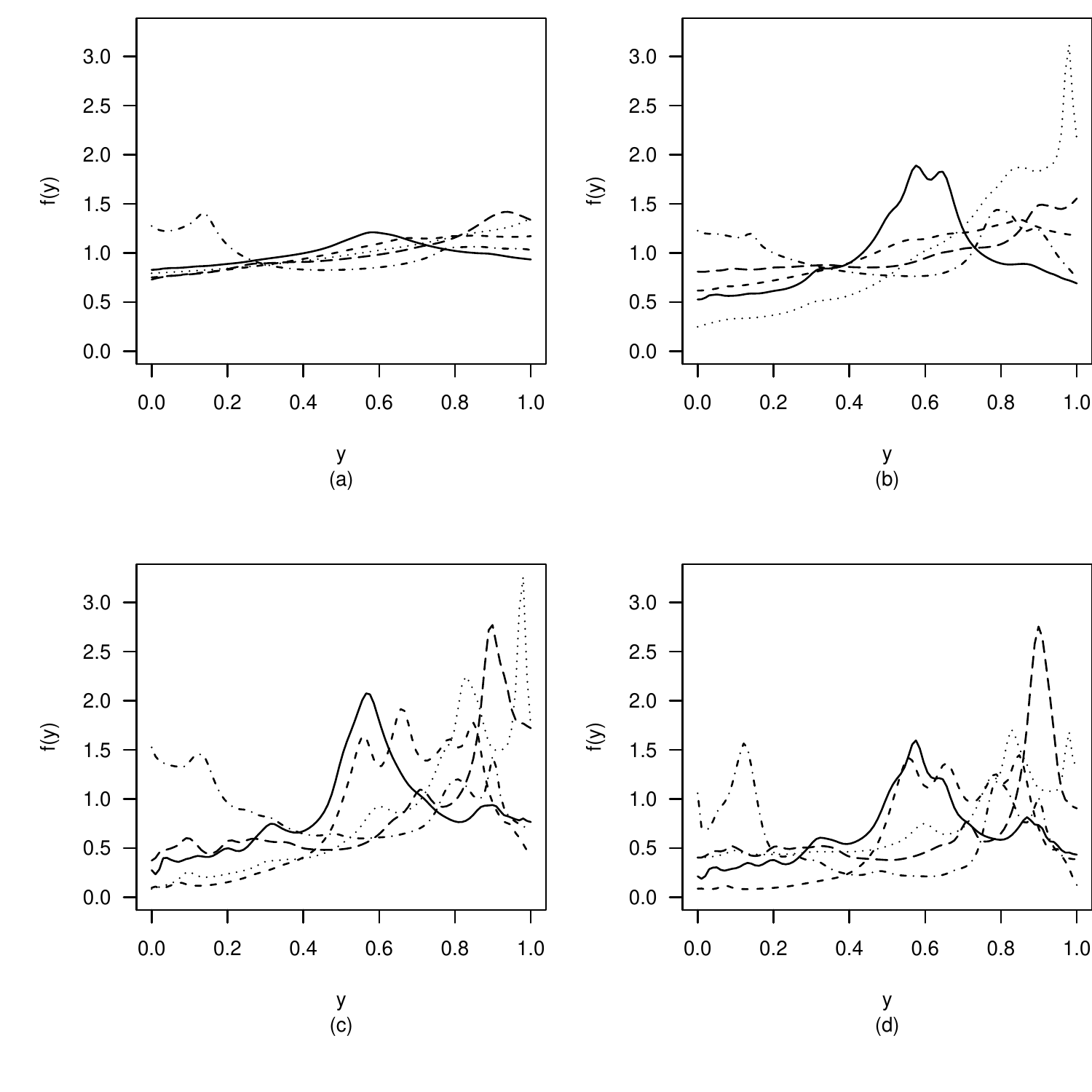}
\end{center}
\caption{Five realizations from an msBP prior with $b=1$ and  (a) $a=1$, (b) $a=2$, (c) $a=5$, and (d) $a=10$.}
\label{fig:realizations1}
\end{figure}

An appealing aspect of the proposed formulation is that individuals sampled from a distribution that is assigned an msBP prior are allocated to clusters in a multiscale fashion.  In particular, two individuals having similar observations may have the same cluster allocation up to some scale $s$, but perhaps are not clustered on finer scales.  Clustering is intrinsically a scale dependent notion, and our model is the first to our knowledge to formalize multiscale clustering in a model based probabilistic manner.  Under the above structure, the probability that two individuals $i$ and $i'$ are assigned to the same scale $s$ cluster is one for $s=0$ and for $s>0$, is equal to  
\begin{eqnarray}
\lefteqn{ 2^s \mbox{E}\bigg\{ \prod_{l=0}^{s-1} (1-S_l)^2 T_l^2 \bigg\} } \nonumber \\ 
& & = 2^s \prod_{l=0}^{s-1} \mbox{E}( \overline{S}_l^2 ) \mbox{E}( T_l^2 )  = 2^s \bigg( \frac{a}{ a+ 2} \bigg)^s \bigg( \frac{1}{2} \bigg)^s  \bigg( \frac{b+1}{2b + 1} \bigg)^s 
= \bigg\{ \bigg( \frac{a}{a+2} \bigg) \bigg( \frac{b+1}{2b+1}\bigg) \bigg\}^s. \nonumber
\end{eqnarray}
This is derived by calculating the expected probability that two individuals travel though node $h$ at scale $s$ and multiplying by the number of nodes in scale $s$.  This form is intuitive.  As $b \to 0$, the Be($b,b$) density degenerates to $0.5\delta_0 + 0.5\delta_1$, so that variability among subjects in the chosen paths through the tree decreases and all subjects take a common path chosen completely at random via unbiased coin flips at each node.  In such a limiting case, $(b+1)/(2b+1) \to 1$ and the probability of clustering subjects at scale $s$ is simply the probability of surviving to that scale and not being allocated to a coarser scale component.  At the other extreme, as $b \to \infty$ each subject independently flips an unbiased coin in deciding to go right or left at each node of the tree, and $(b+1)/(2b+1) \to 1/2$.  Hyperpriors can be chosen for $a$ and $b$ to allow the data to inform about these tuning parameters; we find that choosing a hyperprior for $a$ is particularly important, with
$b=1$ as a default.

Approximations of the msBP process can be obtained fixing an upper bound $s$ for the depth of the tree. The truncation is applied by pruning $\mathcal T$ at scale $s$, setting $S_{s,h} = 1$ for each $h = 1, \dots, 2^s$ as done in \citet{art:ishw:jame:2001} and related works in the single scale case.
We denote the scale $s$ approximation as  
\begin{eqnarray}
f^s(y) = \sum_{l=0}^s \sum_{h=1}^{2^l} \tilde{\pi}_{l,h} \mbox{Be}( y; h, 2^l - h + 1), \label{eq:fs}
\end{eqnarray}
with $\tilde{\pi}_{l,h}$ identical to $\pi_{l,h}$ except that we set all the stopping probabilities at scale $s$ equal to one to ensure that the weights sum to one and that $f^s(y)$ is a valid probability density on $\mathcal{Y} = [0,1]$.  Let $\mathcal{T}^s$ denote the pruned binary tree of weights. It is interesting to study the accuracy of the approximation of $f^s(y)$ to $f(y)$ as the scale $s$ changes under different metrics.  For example, using the total variation distance, \begin{eqnarray}
\lefteqn{ d_{TV}(P_s, P) = \sup_{B \in \mathcal{B}} | P^s(B) - P(B) | } \nonumber \\
& & = \sup_{B \in \mathcal{B}} \bigg| \sum_{h=1}^{2^s} \tilde{\pi}_{s,h} \mbox{Be}( B; h, 2^s - h + 1) - 
\sum_{l=s}^{\infty}\sum_{h=1}^{2^l} \pi_{l,h} \mbox{Be}( B; h, 2^l-h+1) \bigg|,
\label{eq:dTV}
\end{eqnarray}
where $P^s( B ) = \int_B f^s(y)dy$ and $P(B) = \int_B f(y)dy$, for all $B \in \mathcal{B}$, denote the probability measures corresponding to densities $f^s(y)$ and $f(y)$, respectively, with $\mathcal{B}$ the Borel $\sigma$-algebra of subsets of $\mathcal{Y} = [0,1]$. The next lemma shows that {\em a priori} the expected deviation of the truncation approximation $P^s$ from $P$ is zero and the variance is decreasing exponentially with $s$. 
\begin{lemma}
\label{lem:tvd}
The expectation of the total variation distance between $P^s(B)$ and $P(B)$ is zero and its variance is 
\[
	\mbox{Var}\left\{d_{TV}(P_s, P)\right\} = 2\left(\frac{a}{a+1}\right)^s.
\]
\end{lemma}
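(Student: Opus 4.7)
The plan is to handle the two assertions separately. For the expectation, the argument is essentially a corollary of the centering property $E[F(B)] = \lambda(B)$ sketched earlier in Section~\ref{sec:properties}: the same calculation goes through verbatim for $F^s$ once one replaces $\pi_{l,h}$ with $\tilde\pi_{l,h}$ in the scale-$s$ layer (forcing $S_{s,h}=1$), since $E[\tilde\pi_{s,h}] = (a/(2+2a))^s$ and $\sum_{h=1}^{2^l}\mbox{Be}(y;h,2^l-h+1) = 2^l$ still collapse the geometric series to $\lambda(B)$. Hence $E[P^s(B) - P(B)] = 0$ for every Borel $B$, which is the first assertion.

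For the variance, I would decompose the signed difference by grouping the scale-$\geq s$ contributions of $P$ according to which scale-$s$ node each path traverses:
\[
P^s(B) - P(B) = \sum_{h=1}^{2^s} \rho_{s,h}\bigl( A_h(B) - \tilde Q_{s,h}(B) \bigr),
\]
where $A_h(B) = \mbox{Be}(B;h,2^s-h+1)$, $\rho_{s,h} = (1-S_{s,h})\tilde\pi_{s,h}$ is the probability mass flowing strictly below node $(s,h)$, and $\tilde Q_{s,h}(B)$ is the normalized $B$-measure under the sub-msBP rooted below that node (using its own fresh copies of the stick-breaking variables). Conditioning on the $\sigma$-field $\mathcal{F}_{s-1}$ generated by all scale-$<s$ variables makes every $\tilde\pi_{s,h}$ deterministic, and the pairs $\{(1-S_{s,h}), \tilde Q_{s,h}\}$ are mutually independent across $h$ and independent of $\mathcal{F}_{s-1}$. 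The conditional variance therefore expands as a sum over $h$, plus cross terms whose contribution is the product of per-subtree conditional means.

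Each per-subtree piece can then be evaluated via the self-similar recursion $\tilde Q_{s,h} = S'\mbox{Be}(\cdot;h,2^s-h+1) + (1-S')\tilde Q'$, which exposes $\tilde Q_{s,h}$ as a one-step Bernstein mixture driven by a fresh msBP. Using $E[(1-S)^2] = a/(a+2)$, the formula for $E[\tilde\pi_{s,h}^2]$ (which follows from \eqref{eq:variance} with $S_{s,h}$ removed), and taking the outer expectation over $\mathcal{F}_{s-1}$, the $2^s$ subtrees contribute identically in expectation. The identity $\sum_{h}\mbox{Be}(B;h,2^s-h+1) = 2^s\lambda(B)$ collapses the residual $B$-dependence when the telescoping is completed, yielding the claimed $2(a/(a+1))^s$.

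The main obstacle is the cross (off-diagonal) terms in the square. Because $E[\tilde Q_{s,h}(B)] \neq A_h(B)$ for an individual $h$ (the subtree uses only a restricted beta basis and is not centered on the aggregated scale-$s$ basis), these cross terms do not vanish pointwise in $B$. They cancel only after summing over $h$ and using that $\sum_h E[\tilde\pi_{s,h} A_h(B)]$ and the expected scale-$\geq s$ mass of $P(B)$ both equal $(a/(a+1))^s\lambda(B)$. Tracking this cancellation carefully through the double sum, together with the geometric decay of $E[\tilde\pi_{s,h}^2]$ in $s$, is the most delicate step, and the factor $2$ in the final answer emerges from combining the diagonal variance contribution with what remains of the cross contribution after this collapse.
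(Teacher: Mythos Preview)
Your proposal does not establish what the lemma asserts, because throughout you work with the \emph{signed} difference $P^s(B)-P(B)$ for a fixed Borel set $B$, whereas $d_{TV}(P^s,P)=\sup_B|P^s(B)-P(B)|=\tfrac12\int|f^s-f|$ is a single nonnegative random variable with no $B$ in it. For the first part, showing $E[P^s(B)-P(B)]=0$ for every $B$ is correct and useful, but it is not the same as $E[d_{TV}(P^s,P)]=0$; the latter requires controlling an absolute value (indeed, the paper's own argument passes through $\int|f^s-f|$ and the identity $|x|=x+2\max\{-x,0\}$, not merely through the signed expectation). For the second part, the object you are computing, $\mbox{Var}\{P^s(B)-P(B)\}$, genuinely depends on $B$: take $B=[0,1]$ to get variance zero, or $B$ a tiny interval to get variance tending to zero. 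So the claim that ``the identity $\sum_h\mbox{Be}(B;h,2^s-h+1)=2^s\lambda(B)$ collapses the residual $B$-dependence'' cannot be right; that identity collapses \emph{first} moments to $\lambda(B)$, but second moments of $P^s(B)-P(B)$ involve products $\mbox{Be}(B;h,\cdot)\mbox{Be}(B;h',\cdot)$ that do not telescope to a constant.

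The paper's route is quite different and much cruder. It never tries to compute $\mbox{Var}\{d_{TV}\}$ exactly. Instead it uses the $L_1$ representation $2\,d_{TV}=\int|f^s-f|$, bounds this pointwise by the triangle inequality,
\[
\int|f^s-f|\,dy \;\le\; \sum_{l=0}^\infty\sum_{h=1}^{2^l}|\tilde\pi_{l,h}-\pi_{l,h}|
\;=\;\sum_{h=1}^{2^s}(\tilde\pi_{s,h}-\pi_{s,h})+\sum_{l>s}\sum_{h}\pi_{l,h},
\]
and then bounds the second moment of the right-hand side by repeated use of $(x+y)^2\le 2(x^2+y^2)$ and $x^2\le x$ for $x\in[0,1]$, together with $E\big[\sum_h\tilde\pi_{s,h}\big]=(a/(1+a))^s$ and $E\big[\sum_{l>s}\sum_h\pi_{l,h}\big]=(a/(1+a))^{s+1}$. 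What comes out is an \emph{upper bound} of order $(a/(1+a))^s$ on the second moment, not the exact value; the ``$=$'' in the displayed statement should be read as ``$\le$'' (the last line of the paper's proof says ``the variance is less than''). Your conditional-independence / subtree decomposition is more structural and would be the right tool if one wanted exact second moments of $P^s(B)-P(B)$, but it does not give access to the supremum defining $d_{TV}$; to match the paper you need the $L_1$ bound and these elementary inequalities instead.
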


\section{Posterior computation}

In this section we demonstrate that a straightforward Markov chain Monte Carlo (MCMC)
algorithm can be constructed to perform posterior inference under the msBP prior. 
The algorithm consists of two primary steps: (i) allocate each observation 
to a multiscale cluster, conditionally on the current values of the probabilities $\{ \pi_{s,h} \}$; 
(ii) conditionally on the cluster allocations, update the probabilities.

Suppose subject $i$ is assigned to node $(s_i,h_i)$, with $s_i$ the scale and $h_i$ the node within scale.
Conditionally on $\{ \pi_{s,h} \}$, the posterior probability of subject $i$ belonging to node ($s,h$) is simply
\begin{align*}
\mbox{pr}(s_i = s, h_i=h | y_i, \pi_{s,h}) & \propto \pi_{s,h} \mbox{Be}(y; h, 2^s - h +1).
\end{align*}
Consider the total mass assigned at scale $s$, defined as $\pi_s = \sum_{h=1}^{2^s} \pi_{s,h}$, and let $\bar{\pi}_{s,h} = \pi_{s,h}/\pi_{s}$. Under this notation, we can rewrite \eqref{eq:mix1} as
\begin{align*}
	f(y) = \sum_{s=0}^\infty \pi_{s} \sum_{h=1}^{2^s} \bar{\pi}_{s,h} \mbox{Be}(y; h, 2^s - h +1).
\end{align*}
To allocate each subject to a multiscale cluster, we rely on a multiscale modification of the slice sampler of \citet{kall:etal:2011}. Consider the joint density 
\[
	f(y_i,u_i,s_i) \propto \I(u_i<\pi_{s_i}) \sum_{h=1}^{2^{s_i}} \bar{\pi}_{s_i,h}  \mbox{Be}(y_i; h, 2^{s_i} - h +1 ).
\]
The full conditional posterior distributions are
\begin{align}
	& u_i | y_i, s_i \sim U(0, \pi_{s_i}), \label{eq:step1slice} \\
	& \mbox{pr}(s_i=s | u_i,y_i) \propto \I(s: \pi_{s}> u_i)\sum_{h=1}^{2^{s}} \bar{\pi}_{s,h}  \mbox{Be}(y_i; h, 2^{s} - h +1 ),\label{eq:step1bmultinom}\\
	& \mbox{pr}(h_i=h | u_i,y_i,s_i) \propto \bar{\pi}_{s_i,h}  \mbox{Be}(y_i; h, 2^{s_i} - h +1 ).
	\label{eq:step1bmultinom2}
\end{align}

Even with an infinite resolution level, equation \eqref{eq:step1bmultinom} implies that observations are assigned to a finite number of scales and there are a finite number of probabilities to evaluate.  Conditionally on the scale, equation \eqref{eq:step1bmultinom2} induces a simple multinomial sampling, which allocates a subject to a particular node within that scale. Algorithm 2 summarizes the posterior cluster allocation step.
An alternative version of this slice sampler considers the joint density 
\[
	f(y_i,u_i,s_i,h_i) \propto \I(u_i<\pi_{s_i,h_i}) \mbox{Be}(y_i; h_i, 2^{s_i} - h_i +1 ), 
\]
leading to conditional posteriors
\[
	 u_i | y_i, s_i, h_i \sim U(0, \pi_{s_i,h_i}), \quad
	\mbox{pr}(s_i=s,h_i=h |u_i,y_i) \propto \I(\pi_{s,h}> u_i)\mbox{Be}(y_i; h, 2^{s} - h +1 ).
\]
In the second version a greater number of probabilities need to be evaluated for each subject. Our experience suggests that the sampler obtained using \eqref{eq:step1slice}--\eqref{eq:step1bmultinom2}, summarized in Algorithm~2, is more efficient and converges faster.

\begin{algorithm}
\caption{Multiscale cluster posterior allocation for $i$th subject
}
\begin{algorithmic}
\footnotesize
\FOR{each scale $s$}
	\STATE calculate $\pi_s = \sum_{h=1}^{2^s} \pi_{s,h}$:
\ENDFOR
\STATE simulate $u_i | y_i, s_i \sim U(0, \pi_{s_i})$;
\FOR{each scale $s$}
	\IF{$\pi_{s}> u_i$}
		\FOR{$h=1, \dots 2^s$}
			\STATE  compute $\bar{\pi}_{s,h} = \pi_{s,h}/\pi_{s}$
		\ENDFOR
		\STATE compute $ \mbox{pr}(s_i=s | u_i,y_i) \propto \sum_{h=1}^{2^s} \bar{\pi}_{s,h}  \mbox{Be}(y_i; h, 2^{s} - h +1 )$
	\ELSE 
		\STATE $\mbox{pr}(s_i=s | u_i,y_i) = 0$;
	\ENDIF
\ENDFOR
\STATE sample $s_i$ with probability $\mbox{pr}(s_i=s | u_i,y_i)$;
\STATE sample $h_i$ with probability $\mbox{pr}(h_i=h | y_i,s_i) \propto \bar{\pi}_{s_i,h}  \mbox{Be}(y_i; h, 2^{s_i} - h +1 )$;
\end{algorithmic}
\label{algo:postcluster}
\end{algorithm}

Conditionally on cluster allocations, we sample all the stopping and descending-right probabilities from their full conditional posterior distributions: 
\begin{equation}
S_{s,h} \sim \Be{1+n_{s,h}}{a + v_{s,h} - n_{s,h}},\quad
R_{s,h}  \sim \Be{b+r_{s,h}}{b + v_{s,h}  - n_{s,h} - r_{s,h} }, \label{eq:postSR}
\end{equation}
where $v_{s,h}$ is the number of subjects passing through node $(s,h)$, $n_{s,h}$ is the number of subjects stopping at node $(s,h)$,   and  $r_{s,h}$ is the number of subjects that continue to the right after passing through node $(s, h)$. Calculation of $v_{s,h}$ and $r_{s,h}$ can be performed via parallel computing due to the binary tree structure, improving efficiency.

If hyperpriors for $a$ and $b$ are assumed, additional sampling steps are required. Assuming $a \sim \mbox{Ga}(\beta,\gamma)$, its full conditional posterior is
\begin{equation}
	 a | -  \sim \mbox{Ga}\left(\beta + 2^{s'+1} - 1, \gamma - \sum_{s=0}^{s'} \sum_{h=1}^{2^s} \log(1-S_{s,h}) \right), 
\label{eq:posterio_a}
\end{equation}
while if $b \sim \mbox{Ga}(\delta, \lambda)$ its full conditional posterior is proportional to 
\begin{equation}
	b^\delta \prod_{s=0}^{s'} \prod_{h=1}^{2^s} \frac{1}{B(b,b)} \exp \left\{-b \left(
	\lambda \sum_{s=0}^{s'} \sum_{h=1}^{2^s} \log\{R_{s,h} (1 - R_{s,h} )\} \right) \right\},
\label{eq:posterio_b}
\end{equation}
where $s'$ is the maximum occupied scale and $B(p, q)$ is the Beta function. To sample
from the latter distribution, a Metropolis-Hastings step 
 is required. The Gibbs sampler iterates the steps outlined in Algorithm ~\ref{algo:gibbs}.

\begin{algorithm}
\caption{Gibbs sampler steps for posterior computation under msBP prior}
\begin{algorithmic}
\footnotesize
\FOR{ $i = 1, \dots, n$}
\STATE assign observation $i$ to a cluster $(s_i, h_i)$ as in Algorithm~\ref{algo:postcluster}.
\ENDFOR
\STATE compute $n_{s,h}$ the number of subjects in cluster $(h,s)$ for all occupied clusters;
\STATE compute $v_{s,h}$ the number of subjects that pass through node $(h,s)$;
\STATE compute $r_{s,h}$ the number of subjects that proceed down to the right at node $(h,s)$;
\STATE let $s_{\text{MAX}}$ be the maximum occupied scale;
\FOR{ $s = 0, \dots, s_{\text{MAX}}$}
\FOR{ $h = 1, \dots, 2^s$}
\STATE update $S_{s,h} \sim \Be{1+n_{s,h}}{a + v_{s,h} - n_{s,h}}$
\STATE update $R_{s,h} \sim \Be{b+r_{s,h}}{b + v_{s,h} - n_{s,h} - r_{s,h} }$
\ENDFOR
\ENDFOR
\STATE update $a$ from \eqref{eq:posterio_a};
\STATE update $b$ from \eqref{eq:posterio_b}.
\end{algorithmic}
\label{algo:gibbs}
\end{algorithm}

\section{Simulation study} 
\label{sec:simulation}

We compared our msBP method to standard Bayesian nonparametric techniques including DP location-scale mixtures of Gaussians, DP mixtures of Bernstein polynomials, and mixtures of Polya trees, all using the R package \texttt{DPpackage}.  In addition, we implemented a frequentist wavelet density estimator using the package \texttt{WaveThresh}, and a simple frequentist kernel estimator. Several simulations have been run under different simulation settings leading to qualitatively similar results.
We report the results for four scenarios. Scenario 1 simulated data from a mixture of betas, 0.6Be(3, 3) + 0.4Be(21, 5); Scenario 2 used a mixture of Gaussians, $0.5N (0, 4) + 0.3N (2, 1) + 0.2N (1.5, 0.25)$; Scenario 3 generated data from a density supported on the positive real line, a mixture of a gamma and a left truncated normal, $0.9\mbox{Ga}(2, 2) + 0.1N_{\mbox{\tiny LT}} (4, 0.4)$; finally, Scenario 4 generated data from a symmetric density with two spiky modes,  $0.7N (0, 4) + 0.1N (0.5, 0.01) + 0.2N (1.5, 0.4)$. 

For each case, we generated sample sizes of $n = 25, 50, 100$. Each of the approaches were applied to $200$ replicated data sets under each scenario. The methods were compared based on a Monte Carlo approximation to the mean Kolmogorov-Smirnov distance (KS), $L_1$ and $L_2$ distances.


To implement Algorithm 3, we exploit the binary tree structure of our modelling framework using efficient {\small C}\texttt{++} code embedded into R functions. In implementing the Gibbs sampler, the first 1{,}000 iterations were discarded as a burn-in
and the next 2{,}000 samples were used to calculate the posterior mean of the density on a fine grid of points. To center our prior, using a default empirical Bayes approach, we set $g_0$ equal to a kernel estimate. For the hyperparameters we fixed $b = 1$ and let $a \sim \mbox{Ga}(5, 0.5)$. We truncated the depth of the binary tree to the sixth scale. The values of the density for a wide variety of points in the domain were monitored to gauge rates of apparent convergence and mixing. The trace plots showed excellent mixing, and the \citet{gewe:1992} diagnostic suggested rapid convergence.

\begin{landscape}
\begin{table}
\begin{center}
\caption{Mean Kolmogorov Smirnoff (KS) distance, mean $L_1$ distance ($L_1$), and mean $L_2$ distance ($L_2$) between the true densities and the posterior msBP estimate (msBP), posterior DP mixture of Gaussians estimate (DPM), posterior DP mixture of Bernstein Polynomials estimate (DPB), posterior Polya's Tree estimate (PT), frequentist wavelet estimate (W), and frequantist kernel smoothing estimate (K) for Scenario 1 (S1), Scenario 2 (S2), Scenario 3 (S3), and Scenario 4 (S4). Mean distances computed over 200 samples, with Monte Carlo error in parenthesis} \scriptsize											
\begin{tabular}{ll|rrr|rrr|rrr} \hline											
 &	& 	\multicolumn{3}{c}{$n=25$} &			\multicolumn{3}{c}{$n=50$} &			\multicolumn{3}{c}{$n=100$} \\ 			
&	& 	\multicolumn{1}{c}{KS} & 	\multicolumn{1}{c}{$L_1$} & 	\multicolumn{1}{c}{$L_2$} & 	\multicolumn{1}{c}{KS} & 	\multicolumn{1}{c}{$L_1$} & 	\multicolumn{1}{c}{$L_2$} & 	\multicolumn{1}{c}{KS} & 	\multicolumn{1}{c}{$L_1$} & 	\multicolumn{1}{c}{$L_2$} \\ \hline	
S1 &	msBP &	0.9616 (0.28) &	15.3337 (3.79) &	9.0286 (4.06) &	0.8529 (0.20) &	12.4909 (2.78) &	5.8835 (2.59) &	0.7318 (0.20) &	10.2247 (2.44) &	4.1602 (1.96) \\	
&	DPM &	1.5785 (0.16) &	18.1684 (1.78) &	15.659 (2.76) &	1.4137 (0.15) &	18.1139 (1.50) &	13.4228 (2.39) &	1.3558 (0.17) &	18.2278 (1.47) &	13.0673 (2.46) \\	
&	DPBP &	1.2443 (0.19) &	22.6341 (2.42) &	15.9829 (3.53) &	0.9245 (0.27) &	15.3053 (3.83) &	8.2186 (4.03) &	0.6147 (0.24) &	9.7378 (3.12) &	3.3916 (2.16) \\	
&	PT & 	2.4917 (0.00) &	952.1645 (2.14) &	1391.3997 (2.74) &	2.4917 (0.01) &	951.1084 (1.26) &	1389.2295 (1.43) &	2.4917 (0.01) &	951.5270 (0.94) &	1389.7410 (1.11) \\	
&	W &	1.6867 (0.05) &	26.5373 (0.81) &	23.2277 (1.27) &	1.6481 (0.04) &	25.8640 (0.73) &	22.1622 (1.15) &	1.6425 (0.03) &	25.7625 (0.54) &	21.9891 (0.84) \\	
&	K & 	1.0629 (0.24) &	15.8933 (3.44) &	9.4448 (3.47) &	0.8812 (0.21) &	12.6056 (2.78) &	6.0769 (2.70) &	0.7623 (0.19) &	10.3960 (2.53) &	4.3419 (1.99) \\	\hline
S2 &	msBP &	0.0947 (0.03) &	1.5028 (0.32) &	0.0812 (0.03) &	0.0742 (0.02) &	1.1060 (0.26) &	0.0441 (0.02) &	0.0642 (0.01) &	0.9616 (0.18) &	0.0327 (0.01) \\	
&	DPM &	0.1385 (0.06) &	1.7884 (0.53) &	0.1389 (0.08) &	0.1012 (0.04) &	1.3192 (0.40) &	0.0728 (0.04) &	0.0700 (0.03) &	0.9485 (0.30) &	0.0372 (0.02) \\	
&	DPBP &	0.2339 (0.01) &	4.3513 (0.05) &	0.6461 (0.01) &	0.2339 (0.01) &	4.4880 (0.07) &	0.6648 (0.01) &	0.2339 (0.01) &	4.5672 (0.07) &	0.6783 (0.01) \\	
&	PT & 	0.2347 (0.01) &	94.2408 (0.44) &	13.9915 (0.06) &	0.2339 (0.01) &	93.9891 (0.33) &	13.9568 (0.03) &	0.2339 (0.01) &	93.8067 (0.28) &	13.9393 (0.02) \\	
&	W &	0.1424 (0.05) &	2.1501 (0.66) &	0.1756 (0.10) &	0.1027 (0.03) &	1.5620 (0.44) &	0.0917 (0.05) &	0.0717 (0.02) &	1.1410 (0.31) &	0.0468 (0.02) \\	
&	K & 	0.0931 (0.02) &	1.4714 (0.31) &	0.0767 (0.03) &	0.0778 (0.02) &	1.1730 (0.26) &	0.0485 (0.02) &	0.0665 (0.02) &	0.9893 (0.18) &	0.0344 (0.01) \\	\hline
S3 &	msBP &	0.2806 (0.05) &	2.7758 (0.77) &	0.3854 (0.18) &	0.2571 (0.04) &	2.2984 (0.64) &	0.2770 (0.13) &	0.2252 (0.03) &	1.8722 (0.43) &	0.1907 (0.07) \\	
&	DPM &	0.2494 (0.07) &	2.8651 (0.70) &	0.3922 (0.20) &	0.2276 (0.06) &	2.3452 (0.58) &	0.2760 (0.14) &	0.1938 (0.05) &	1.8194 (0.31) &	0.1735 (0.07) \\	
&	DPBP &	0.5137 (0.04) &	6.8264 (0.21) &	2.1555 (0.16) &	0.5735 (0.03) &	7.0762 (0.21) &	2.4045 (0.15) &	0.6019 (0.01) &	7.1933 (0.20) &	2.5392 (0.13) \\	
&	PT & 	0.6621 (0.01) &	157.7443 (0.85) &	65.6996 (0.31) &	0.6621 (0.01) &	157.2414 (0.50) &	65.5554 (0.18) &	0.6621 (0.01) &	156.9909 (0.26) &	65.4821 (0.10) \\	
&	W &	0.2982 (0.05) &	3.4876 (0.71) &	0.4979 (0.17) &	0.2759 (0.04) &	3.1490 (0.41) &	0.4145 (0.10) &	0.2599 (0.02) &	2.9623 (0.25) &	0.3631 (0.05) \\	
&	K & 	0.2802 (0.05) &	2.963 (0.92) &	0.4318 (0.23) &	0.2521 (0.04) &	2.4200 (0.69) &	0.3006 (0.14) &	0.2231 (0.03) &	1.9428 (0.45) &	0.2042 (0.08) \\	\hline
S4 &	msBP &	0.2942 (0.04) &	4.3193 (0.82) &	0.5608 (0.12) &	0.2943 (0.03) &	3.6779 (0.57) &	0.5092 (0.05) &	0.2856 (0.02) &	3.4838 (0.35) &	0.4759 (0.04) \\	
&	DPM &	0.3203 (0.06) &	5.0048 (0.80) &	0.7094 (0.25) &	0.3037 (0.05) &	4.4272 (0.63) &	0.5958 (0.17) &	0.2966 (0.04) &	3.9836 (0.59) &	0.5428 (0.15) \\	
&	DPBP &	0.4995 (0.01) &	8.9148 (0.16) &	1.8019 (0.05) &	0.4995 (0.01) &	9.0004 (0.13) &	1.7803 (0.05) &	0.4995 (0.01) &	9.0851 (0.08) &	1.7881 (0.03) \\	
&	PT & 	0.4995 (0.01) &	93.1303 (0.78) &	20.5193 (0.17) &	0.4995 (0.01) &	92.7538 (0.65) &	20.4479 (0.17) &	0.4995 (0.01) &	92.4458 (0.52) &	20.4152 (0.14) \\	
&	W &	0.2990 (0.06) &	5.4053 (0.76) &	0.7075 (0.23) &	0.2831 (0.04) &	4.613 (0.53) &	0.5752 (0.12) &	0.2734 (0.03) &	4.0647 (0.42) &	0.5036 (0.08) \\	
&	K & 	0.3000 (0.04) &	4.3220 (0.83) &	0.5834 (0.14) &	0.2924 (0.03) &	3.799 (0.61) &	0.5143 (0.08) &	0.2834 (0.02) &	3.5222 (0.44) &	0.4744 (0.05) \\	\hline
\end{tabular}
\end{center}
\scriptsize Note: 0.00 stands for ``$<0.01$''
\label{tab:simulation}											
\end{table}											
\end{landscape}

The results of the simulation are reported in Table 1 and Figure \ref{fig:siml1}. 
The proposed method performs better or equally to the best competitor in almost all scenarios and sample sizes.  The worst performance in each case is obtained for mixtures of Polya trees, with overly-spiky density estimates leading to higher distances from the truth.
In Scenario 1 the msBP approach beats all the competitors, except in large sample sizes when single-scale DP mixtures of Bernstein polynomials are comparable. In Scenario 2 the msBP approach is comparable to the frequentist kernel smoother estimator. In scenario 3 the msBP approach is comparable to DP location-scale mixtures and finally, in Scenario 4 our multiscale approach is clearly performing better than any other method.

\begin{figure}
\begin{center}
\includegraphics[page=1, scale=.8]{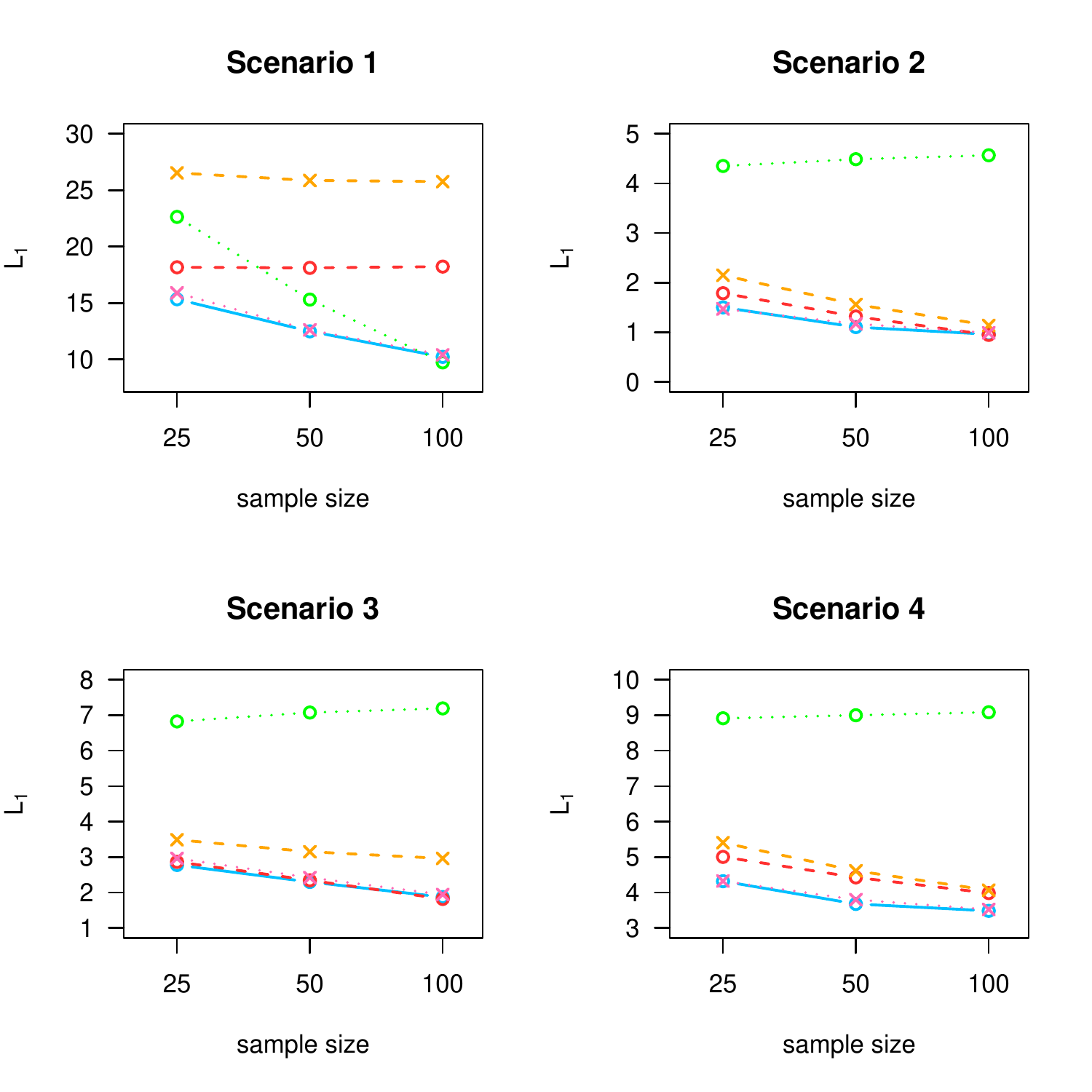}
\end{center}
\caption{Mean $L_1$ distance between the true densities and the posterior msBP estimate (continuous line, circle dots), posterior DP mixture of Gaussians estimate (dashed line, circle dots), posterior DP mixture of Bernstein polynomials estimate (dotted line, circle dots), frequentist wavelet estimate (dashed line, ``$\times$'' dots), and frequentist kernel smoothing estimate (dotted line, ``$\times$'' dots) under the four scenarios. The posterior mixture of Polya trees estimate is far away and it is not reported for graphical reasons.}
\label{fig:siml1}
\end{figure}

\section{Extensions} 

An appealing aspect of the proposed method is ease of generalization to include predictors, hierarchical dependence, time series, spatial structure and so on.  To incorporate additional structure, one can replace model \eqref{eq:mix1} for the stopping and right path probabilities with an appropriate variant.  Similar extensions have been proposed for single resolution mixture models by replacing the beta random variables in a stick-breaking construction with probit regressions \citep{chun:duns:2009, rodr:duns:2011}, logistic regressions \citep{ren:etal:2011} or broader stochastic processes \citep{pati:etal:2013}.  We focus here on one interesting extension to the under-studied problem of Bayesian multiscale inferences on differences between groups.

\subsection{Multiscale testing of group differences}

Motivated by epigenetic data, we propose Bayesian multiscale hypothesis tests of group differences using multiscale Bernstein polynomials.  DNA methylation arrays collect data on epigenetic modifications at a large number of CpG sites.  Let $y_i = (y_{i1},\ldots,y_{ip})^T$ denote the DNA methylation data for patient $i$ at $p$ different sites, with $d_i \in \{0,1\}$ denoting the patient's disease status, either $d_i=0$ for controls or $d_i=1$ for cases.  Current standard analyses rely on independent screening using $t$-tests to assess differences between cases and control at each site.  However, DNA methylation data are constrained to $y_{ij} \in (0,1)$ and tend to have a complex distribution having local spikes and varying smoothness.  

As illustration we focus on nonparametric independent screening; the approach is easily adapted to accommodate dependence across sites.  We center our prior on the uniform as a default. The density of $y_{ij}$ given $d_i=0$ is modeled as in previous sections.   Let $H_{0}: f_{0} = f_{1}$ denote the {\em global} null hypothesis of no difference between groups, with $H_{1}: f_{0} \neq f_{1}$ denoting the alternative. 
Using an msBP representation, $f_0 = f_1$ if the groups share weights over the dictionary of beta densities. If $f_0 \neq f_1$, we may have the same weights on the dictionary elements up to a given scale, so that the densities are equivalent up to that scale but not at finer scales.  With this in mind, let  $H_0^s: f^s_{0} = f^s_{1}$ denote the null hypothesis of no differences between groups at scale $s$, and  $H_{1}^s: f_{0}^s \neq f_{1}^s$ the alternative. As $H_0^0$ is true with probability one, we set $S_{0,1}=0$ and concentrate on $H_0^s$ for $s \ge 1$.


Each of the $n$ subjects in the sample takes a path through the binary tree, stopping at a finite depth.  Let $\mathcal{I}^s = \{ i: s_i \ge s \}$ index the subjects {\em surviving} up to scale $s$ and let $\mathcal{N}^s$ denote the actions of these subjects at scale $s$, including stopping or progressing downward to the left or right for each of the nodes.  Subscripts $(d)$ on $\mathcal{I}^s$ and $\mathcal{N}^s$ denote the restriction to subjects having $d_i=d$.  Conditionally on $H_0^s$, the probabilities for each scale $s$ action are the same in the two groups and the likelihood of actions $\mathcal{N}^s$ is
\begin{align}
\mbox{pr} & (\mathcal{N}^s | H_0^s) 
		 = \int_{\mathcal{T}^{}} \mbox{pr}(\mathcal{N}^{s} | \mathcal{T}^{}) \mbox{pr}(\mathcal{T}^{}|a,b) d \mathcal{T}^{} \notag \\
		& =  \left\{\frac{\Gamma(a+1)}{\Gamma(a)} \frac{\Gamma(2b)}{\Gamma(b)^2} \right\}^{2^{s}} 
		     \int_{\mathcal{T}^{}}  \prod_{h = 1}^{2^{s}} 
				S_{s,h}^{n_{s,h}} (1-S_{s,h})^{\hat{a}_{s,h}-1} 
				R_{s,h}^{\hat{b}_{s,h}-1}  (1 - R_{s,h})^{\hat{c}_{s,h} -1} d \mathcal{T}^{} \notag \\
		& =   \left\{\frac{\Gamma(a+1)\Gamma(2b)}{\Gamma(a) \Gamma(b)^2} \right\}^{2^{s}} 
			 \prod_{h = 1}^{2^{s}} 
			\frac{\Gamma(1 + n_{s,h}) \Gamma(\hat{a}) }{\Gamma(a + v_{s,h} + 1)} 
			\frac{\Gamma(\hat{b}) \Gamma(\hat{c}) }{\Gamma(2b + v_{s,h} - n_{s,h})}, \label{pnest0}
\end{align} 
where $\hat{a}_{s,h} =  a + v_{s,h} - n_{s,h}$, $\hat{b}_{s,h} = b + r_{s,h}$, and $\hat{c}_{s,h} = b + v_{s,h} - n_{s,h} - r_{s,h}$. 
Similarly under $H_1$ we have
\begin{align}
\mbox{pr}(\mathcal{N}^{s} | H_1^{s})  = &\, \mbox{pr}(\mathcal{N}_{(0)}^{s}| H_1^{s})
 \times \mbox{pr}(\mathcal{N}_{(1)}^{s}| H_1^{s}) \notag\\
	= &  \left\{\frac{\Gamma(a+1)\Gamma(2b)}{\Gamma(a) \Gamma(b)^2} \right\}^{2^{2s}} 
			\prod_{h = 1}^{2^{s}} 
			\frac{\Gamma(1 + n_{s,h}^{(0)}) \Gamma(\hat{a}^{(0)}) }{\Gamma(a + v_{s,h}^{(0)} + 1)} 
			\frac{\Gamma(\hat{b}^{(0)}) \Gamma(\hat{c}^{(0)}) }{\Gamma(2b + v_{s,h}^{(0)} - n_{s,h}^{(0)})}\times \notag\\
	& 
			\prod_{h = 1}^{2^s} 
			\frac{\Gamma(1 + n_{s,h}^{(1)}) \Gamma(\hat{a}^{(1)}) }{\Gamma(a + v_{s,h}^{(1)} + 1)} 
			\frac{\Gamma(\hat{b}^{(1)}) \Gamma(\hat{c}^{(1)}) }{\Gamma(2b + v_{s,h}^{(1)} - n_{s,h}^{(1)})}, \label{pnest1}
\end{align}
where $v_{s,h}^{(d)}$ is the number of subjects passing through node $(s,h)$ in group $d$, $n_{s,h}^{(d)}$ is the number of subjects stopping at node $(s,h)$ in group $d$, and $r_{s,h}^{(d)}$ is the number of subjects that continue to the right after passing through node $(s, h)$ in group $d$, with $d=0,1$.

Combining \eqref{pnest0}--\eqref{pnest1} we can obtain a closed form for the posterior probability of $H_0$ being true at scale $s$, given $\mathcal{N}^s_{(0)}$ and $\mathcal{N}^s_{(1)}$:
\begin{align}
\mbox{pr}(H_0^s|\mathcal{N}^s_{(0)}, \mathcal{N}^s_{(1)}) 
	& = \frac{P_0^s\mbox{pr}(\mathcal{N}^s_{(0)}, \mathcal{N}^s_{(1)} | H_0^s) }{ P_0^s\mbox{pr}(\mathcal{N}^s_{(0)}, \mathcal{N}^s_{(1)} | H_0^s)  + (1-P_0^s)\mbox{pr}(\mathcal{N}^s_{(0)}, \mathcal{N}^s_{(1)} | H_1^s)},\label{eq:postH0} 
\end{align}
where $P_0^s$ is our prior guess for the null being true at scale $s$. The global null will be the cumulative product of the $\mbox{pr}(H_0^s|\mathcal{N}^s_{(0)}, \mathcal{N}^s_{(1)}) $ for each scale. An interesting feature of this formulation is to have a multiscale hypothesis testing setup. Indeed the posterior probability of $H_0$ up to scale $\tilde{s}$ will be $\prod_{s\leq\tilde{s}} \mbox{pr}(H_0^s|\mathcal{N}^s_{(0)}, \mathcal{N}^s_{(1)})$ and hence the hypothesis that two groups have the same distribution may have high posterior probability for coarse scales, but can be rejected for a finer scale.

\subsection{Posterior computation}

The conditional posterior probability for $H_0^s$ in \eqref{eq:postH0} is simple, but not directly useful due to the dependence on the unknown $\mathcal{N}^s$ allocations.  To marginalize out these allocations, we modify Algorithm~\ref{algo:gibbs}.  For node $h$ at scale $s$, let $\pi_{s,h}^{(0)}$ denote the weight under $H_0^s$ and $\pi_{s,h}^{(1,d)}$ for $d=0,1$ denote the group-specific weights under $H_1^s$.  At each iteration, the allocation of subject $i$ of group $d$ will be made according to the tree of weights given by
\begin{equation}
	\pi_{s,h}^{(d)} = P(H_0^s|\mathcal{N}^s_{(0)}, \mathcal{N}^s_{(1)}) \pi_{s,h}^{(0)} + 
		\{1 - P(H_0^s|\mathcal{N}^s_{(0)}, \mathcal{N}^s_{(1)})\} \pi_{s,h}^{(1,d)}.
\label{eq:treetest}
\end{equation}
Given the allocation one can calculate all the quantities in \eqref{pnest0}--\eqref{pnest1} and then update the stopping and descending probabilities under $H_0$ and $H_1$ following \eqref{eq:postSR} and the posterior of the null following \eqref{eq:postH0} up to a desired upper scale. 


\section{Application} 
We illustrate our approach on a methylation array dataset for $n$ = 597 breast cancer samples registered at $p = 21{,}986$ CpG sites \citep{tcga:2012}. We test for differences between tumors that are identified as basal-like ($n_0$ = 112) against those that are not ($n_1$ = 485) at each CpG site.  This same problem was considered in a single scale manner by \citet{lock:duns:2014} using finite mixtures of truncated Gaussians.

We run the Gibbs sampler reported in Algorithm~\ref{algo:test} in the Appendix, assuming a uniform prior for $P_0^s$ for each scale $s$. We fixed the maximum scale to 4 as an upper bound, as finer scale tests were not thought to be interpretable.  The sampler is run for 2{,}000 iterations after 1{,}000 burn-in iterations. The chains mix well and converge quickly for all sites and all scales. 

The posterior distribution of $1-P_0^s$ for each scale provides a summary of the overall proportion of CpG sites for which there was a difference between the two groups.  The estimated posterior means for these probabilities were 0.04, 0.07, 0.05 and 0.03, respectively, for scales $1,\ldots,4$.  This suggests that DNA methylation levels were different for a small minority of the CpG sites, which is as expected.  
Examining the posterior probabilites of $H_1^s$ across the 21,986 CpG sites, consistently with the estimates for $1-P_0^s$, we find that scale-specific estimated posterior probabilities are close to zero for most sites.  
Focusing on the 1{,}696 sites for which the overall posterior probability of $H_1$ is greater than $0.5$, we calculated the minimal scale showing evidence of a difference, $\min\{ s: \hat{Pr}(H_1^s|-)>0.5 \}$, with $\hat{Pr}(H_1^s|-)$ denoting the estimated posterior probability.  The proportions of sites having minimal scale equal to $1,2,3,4$ were $47\%, 43\%, 7\%, 3\%$ respectively. 

Figure~\ref{fig:fourgroups} shows $\hat{Pr}(H_1^s|-)$ for these 1{,}696 sites. In the top right quadrant we report those sites having minimal scale equal to $1$. Two different patterns are evident: (1) consistently high 
$\hat{Pr}(H_1^s|-)$, with differences evident at the coarse scale. Site \emph{cg00117172} is among those and its sample distribution is reported in panel (a) of Figure~\ref{fig:examples}. (2) moderate $\hat{Pr}(H_1^s|-)$ for $s=1$, with clear evidence at $s=2$.  Averages of the sites in these two groups are shown with thick dashed lines.

The top right panel, representing sites having minimal scale equal to $2$, presents two patterns: (1) no differences at scale one but clear evidence of $H_1$ at scale two. Site \emph{cg00186954} in panel (b) of Figure~\ref{fig:examples} has this behavior. (2) moderately growing evidence for $H_1$ for increasing scale level. The bottom two panels show results for sites having minimal scale equal to 3 and 4, showing again two different patterns: (1) A group with mild or no evidence for $H_1$ up to scale 3 and 4, respectively (e.g. site \emph{cg20603888} reported in panel (c) of Figure~\ref{fig:examples}), and (2) another group with increasing evidence for increasing scale.  These scale-specific significant tests are interesting in that coarser scale differences are more likely to be biologically significant, while very fine scale differences may represent local changes with minor impact.

\begin{figure}[h]
\begin{center}
\includegraphics[scale=.95]{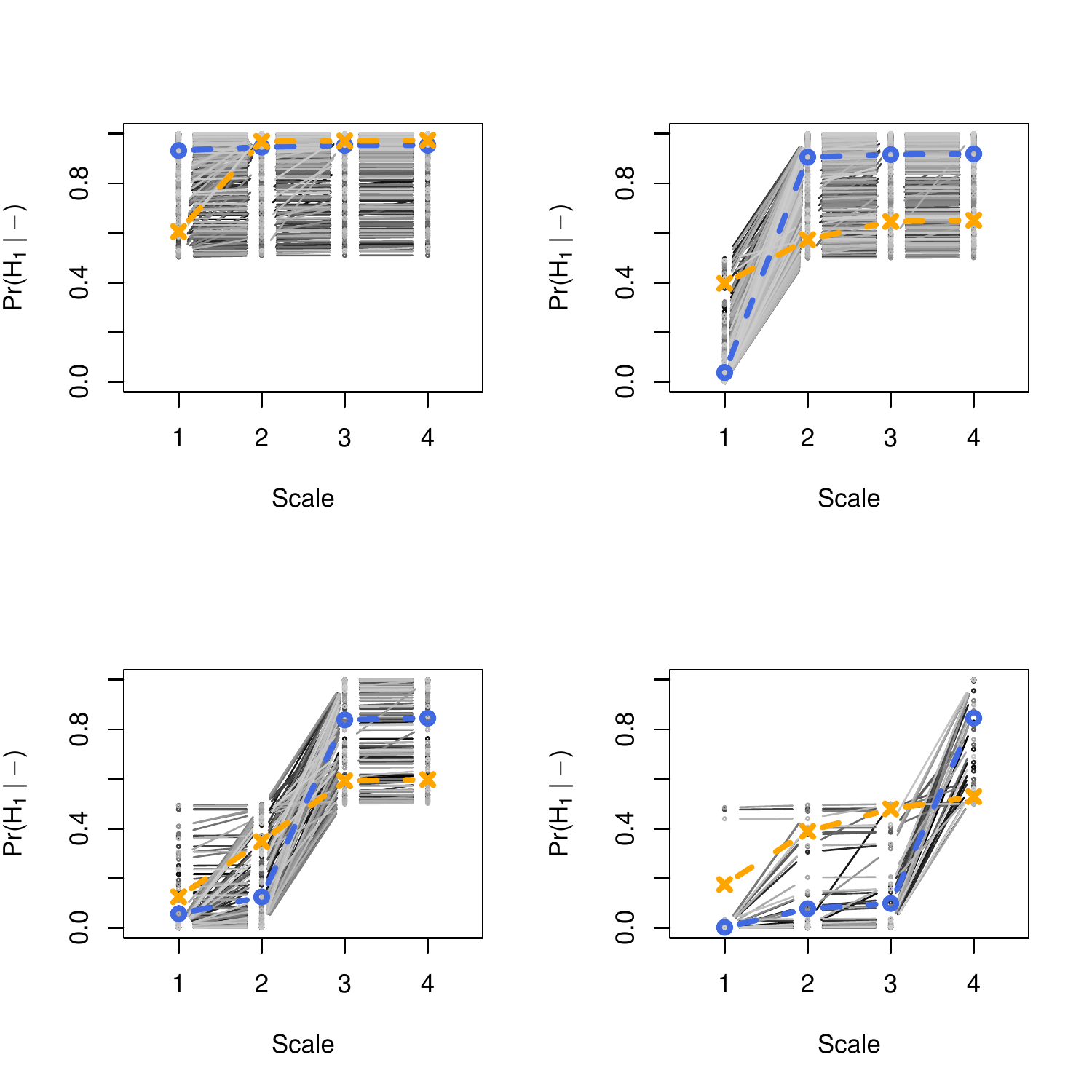}
\end{center}
\caption{Posterior mean probabilities of $H_1$ depending on scale for the 1{,}696 sites, with some evidence of differences in the two groups, grouped in subplots by minimal scale showing $\hat{Pr}(H_1^s|-)>0.5$ for $s=1, \dots 4.$ Within each panel, the thick dashed lines represents the average between the sites in two clusters showing different patterns.}
\label{fig:fourgroups}
\end{figure}
  
\begin{figure}[h]
\begin{center}
\subfigure[]{\includegraphics[scale=.6, page=1]{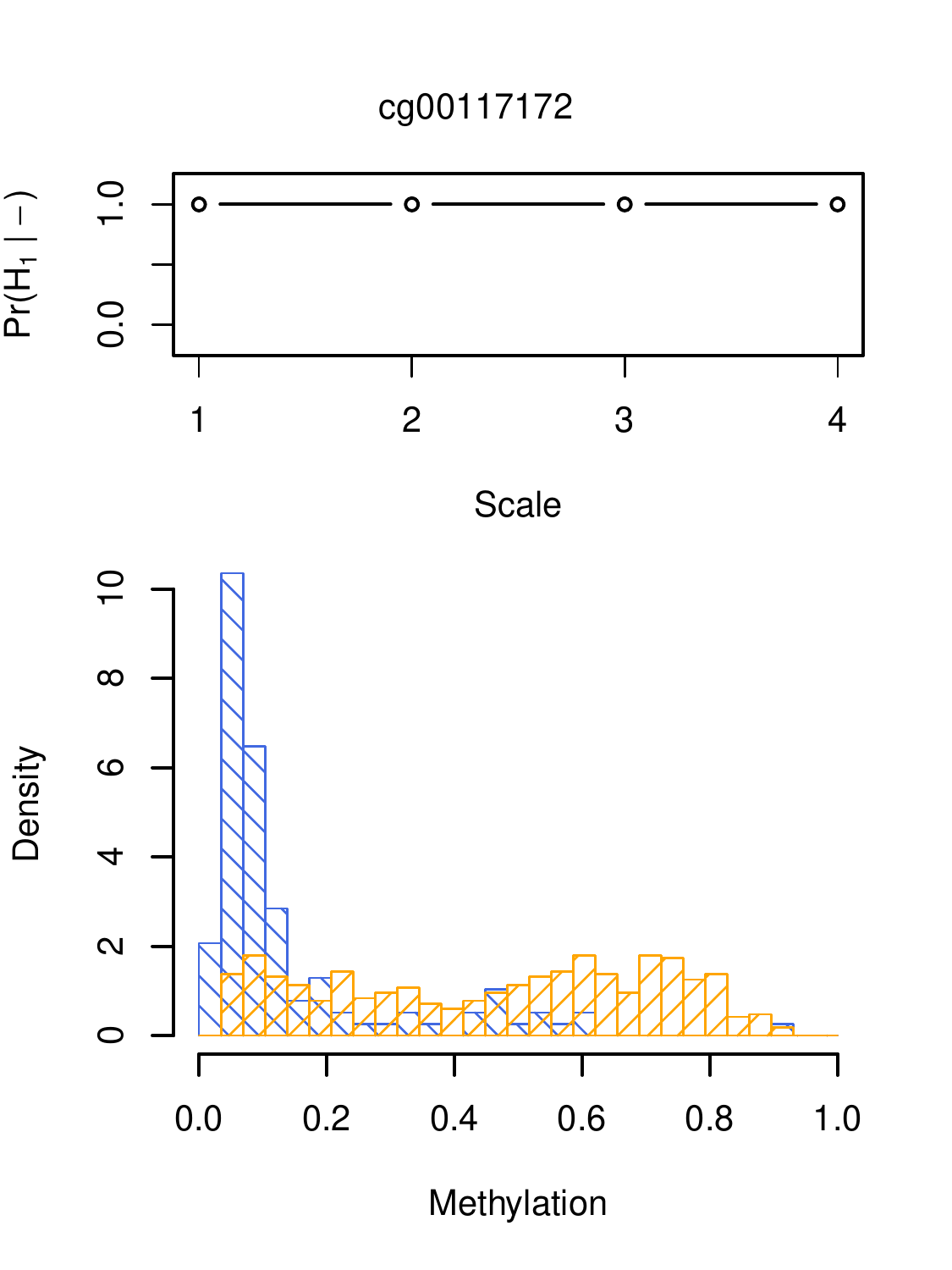}}
\subfigure[]{\includegraphics[scale=.6, page=2]{other4}}
\subfigure[]{\includegraphics[scale=.6, page=1]{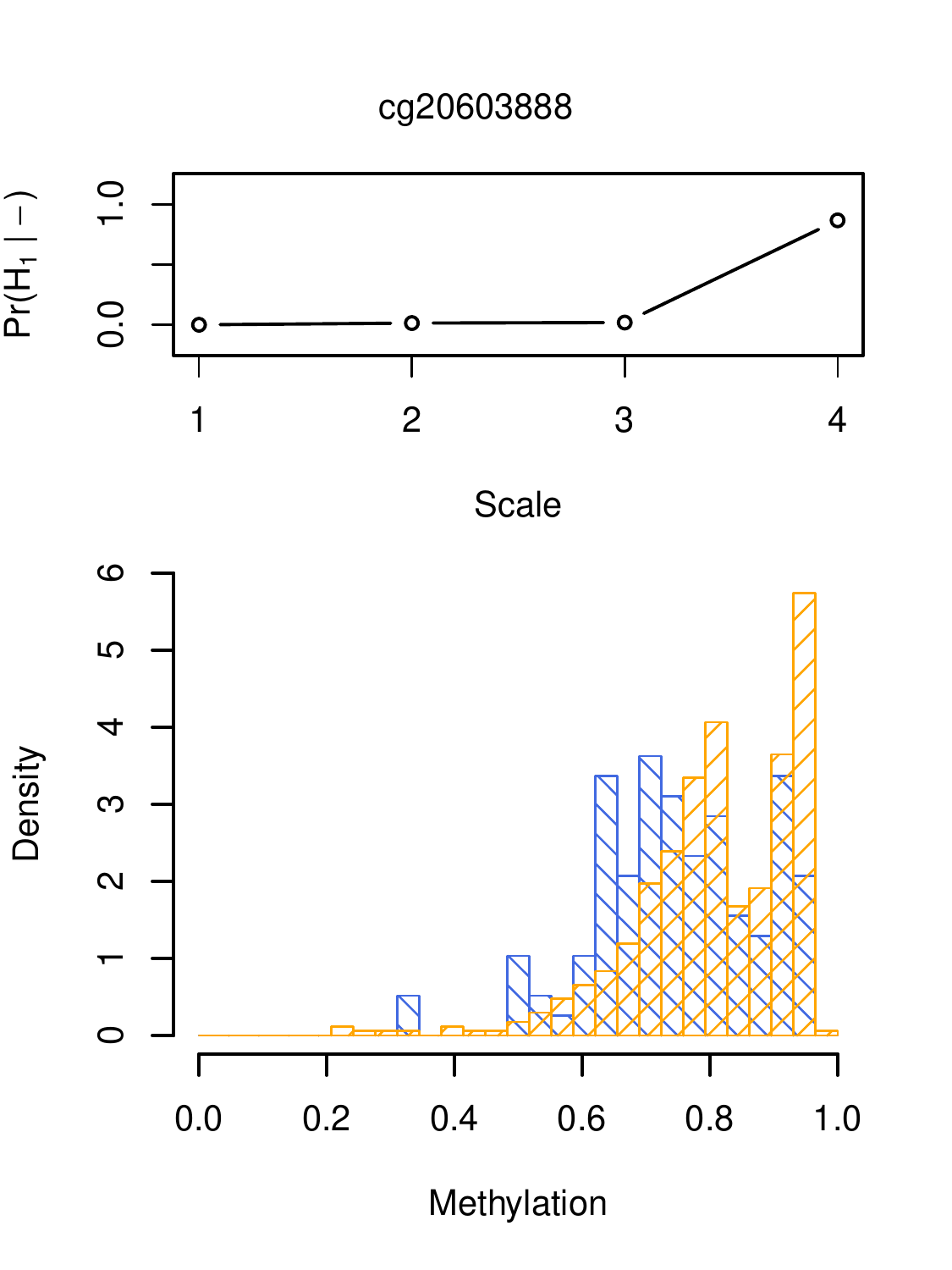}}
\subfigure[]{\includegraphics[scale=.6, page=2]{differenceslock}}
\end{center}
\caption{Histogram of the methylation for the basal (decreasing 45 degree angle shading) and non-basal (increasing 45 degree angle shading) samples for four CpG sites and posterior mean probabilities of $H_1$ in function of scale.}
\label{fig:examples}
\end{figure}



\clearpage
\section{Discussion}

Existing Bayesian nonparametric multiscale tools for density estimation have unappealing characteristics, such as favoring overly­-spiky densities. Our framework overcomes such limitations.  We have demonstrated some practically appealing properties, including simplicity of formulation and ease of computation, and proposed an extension for Bayesian multiscale hypothesis testing of group differences. Multiscale hypothesis testing is of considerable interest in itself, and provides a new view on the topic of nonparametric testing of group differences, with many interesting facets.  For example, it can be argued that in large samples there will always be small local differences in the distributions between groups, which may not be scientifically relevant.  By allowing scale-specific tests, we accommodate the possibility of focusing inference on the range of relevant scales in an application, providing additional insight into the nature of the differences.   We also accommodate scale-specific adaptive borrowing of information across groups in density estimation; extensions to include covariates and hierarchical structure are straightforward.

\section*{Acknowledgement}
The authors thanks Eric Lock for helpful comments on Section 5 and Roberto Vigo for comments on the code implementation. 

\section*{Appendix}

\begin{proof}[Proof of Lemma 1]

For finite $N$ define $\Delta_N = 1 - \sum_{s=0}^N \sum_{h=1}^{2^s} \pi_{s,h}$, for which the following inequality holds:
\begin{equation}
	\Delta_N = \sum_{h=1}^{2^N} \prod_{r\leq N} (1-S_{r,g_{Nhr}})T_{r-1,g_{Nhr}} \leq 2^N \max_{h=1, \dots, 2^N} \prod_{r\leq N} (1-S_{r,g_{Nhr}})T_{r-1,g_{Nhr}}.
\label{eq:delta}
\end{equation}
To establish \eqref{eq:sumtoone}, it is sufficient to take the limit of $\Delta_N$ for $N\to\infty$ and show that it converges to 0 a.s. To this end, take the logarithm of the right hand side of \eqref{eq:delta}, 
\begin{equation}
	\log(\Delta_N) \leq \max_{h=1, \dots, 2^N} \sum_{r\leq N}  \log \left\{ 2^N (1-S_{r,g_{Nhr}})T_{r-1,g_{Nhr}} \right\},
\label{eq:logdelta}
\end{equation}
and notice that for each $h = 1, \dots, 2^N$ we have
\begin{equation}
	E \left\{ 2^N (1-S_{r,g_{Nhr}})T_{r-1,g_{Nhr}} \right\} = 	2^N \left(\frac{a}{a+1}\right) \frac{1}{2^N}  = \frac{a}{a+1}.
\label{eq:explogdelta}
\end{equation}
Therefore taking $N \to \infty$, by Kolmogorov's three series theorem and Jensen's inequality, the argument of the maximum of \eqref{eq:logdelta}, converges to $-\infty$ a.s. for each $h$. Thus $\Delta_N$ converges to 0 a.s. which concludes the proof.
\end{proof}

\begin{proof}[Detail on moments of $F(A)$.]
The expectation of $F(A)$ is simply
\begin{align*}
	E[F(A)] & = E\left[\sum_{s=0}^\infty \sum_{h=1}^{2^s} \pi_{s,h} \int_A \mbox{Be}(y; h, 2^s - h +1) \right]\\
		  & = \sum_{s=0}^\infty \frac{1}{1+a} \left(\frac{a}{1+a} \right)^{s} \frac{1}{2^{s}}  \sum_{h=1}^{2^s} \int_A \mbox{Be}(y; h, 2^s - h +1) \\
		  & = \sum_{s=0}^\infty  \frac{1}{1+a} \left(\frac{a}{1+a} \right)^{s}  \lambda(A)   \\
		  & = \lambda(A) \sum_{s=0}^\infty  \frac{1}{1+a} \left(\frac{a}{1+a} \right)^{s} = \lambda(A),
\end{align*}
where the third equality follows from the fact that the average measure over scale $s$ beta dictionary densities of any region $A$ equals the Lebesgue measure of $A$. 
\end{proof}

\begin{proof}[Proof of Lemma \ref{lem:tvd}]
First note that twice the total variation distance between two measures $P^s$ and $P$ equals the $L_1$ distance between the densities $f^s$ and $f$. For the expectation, the following holds
\[
  E \bigg[ \int \bigg| f^s( y) - f(y)  \bigg| dy \bigg]  =   \int E \bigg[ \bigg| f^s( y) - f(y)  \bigg| \bigg] dy  
\]
by Fubini's theorem. Now since 
\[
   \bigg| f^s( y) - f(y)  \bigg|  =   f^s( y) - f(y)  + 2 \max\{  f( y) - f^s(y), 0 \},
\]
it is sufficient to prove that the expectation of $f^s( y) - f(y)$ is null. This can be done, noting that for each $y \in [0,1]$ and for each scale $s$, the quantity $2^{-s} \sum_{h=1}^{2^s} \mbox{Be}( y; h, 2^s - h + 1) = 1$. Hence
\begin{align*}
\sum_{h=1}^{2^s} & E[\tilde{\pi}_{s,h}]   \mbox{Be}( y; h, 2^s - h + 1) - 
\sum_{l=s}^{\infty}\sum_{h=1}^{2^l} E[\pi_{l,h}] \mbox{Be}( y; h, 2^l-h+1)  = \\
&  = \sum_{h=1}^{2^s} E[\tilde{\pi}_{s,h} - \pi_{s,h}]   \mbox{Be}( y; h, 2^s - h + 1) - 
\sum_{l=s+1}^{\infty}\sum_{h=1}^{2^l} E[\pi_{l,h}] \mbox{Be}( y; h, 2^l-h+1)   \\
& = \left(\frac{a}{1+a}\right)^{s+1} \frac{1}{2^s}\sum_{h=1}^{2^s}   \mbox{Be}( y; h, 2^s - h + 1) - 
\sum_{l=s+1}^{\infty} \frac{1}{1+a} \left(\frac{a}{1+a}\right)^{l} \frac{1}{2^l} \sum_{h=1}^{2^l}  \mbox{Be}( y; h, 2^l-h+1)   \\
& = \left(\frac{a}{1+a}\right)^{s+1} - 
\sum_{l=s+1}^{\infty} \frac{1}{1+a} \left(\frac{a}{1+a}\right)^{l}    
= \left(\frac{a}{1+a}\right)^{s+1} - \left(\frac{a}{1+a}\right)^{s+1}  = 0,
\end{align*}
which concludes the first part of proof. 
Now consider
\begin{align*}
& \int \left| 
  \sum_{l=0}^s       \sum_{h=1}^{2^l} \tilde{\pi}_{l,h} \mbox{Be}( y; h, 2^s - h + 1) - 
  \sum_{l=0}^{\infty}\sum_{h=1}^{2^l} \pi_{l,h}         \mbox{Be}( y; h, 2^l-h+1)  \right| d y \\
=& \int \left| 
  \sum_{l=0}^{\infty}\sum_{h=1}^{2^l} \left(\tilde{\pi}_{l,h} - \pi_{l,h}\right)   \mbox{Be}( y; h, 2^s - h + 1)  \right| d y \\
\leq& \int 
  \sum_{l=0}^{\infty}\sum_{h=1}^{2^l} \left|  \left(\tilde{\pi}_{l,h} - \pi_{l,h}\right)   \mbox{Be}( y; h, 2^s - h + 1)  \right| d y \\
=&  
  \sum_{l=0}^{\infty}\sum_{h=1}^{2^l} \left|  \left(\tilde{\pi}_{l,h} - \pi_{l,h}\right)\right|  \int  \mbox{Be}( y; h, 2^s - h + 1)  d y 
=  
  \sum_{l=0}^{\infty}\sum_{h=1}^{2^l} \left|  \left(\tilde{\pi}_{l,h} - \pi_{l,h}\right)\right|, 
\end{align*}
where the inequality holds since for each $y$ the absolute values of the sum is less than the sum of the absolute values. Since the first moment is null the variance is
\begin{align*}
 E \left[ \left\{ \int \left| f^s( y) - f(y)  \right| dy \right\}^2 \right] & = 
 E \left[ \left(   \sum_{l=0}^{\infty}\sum_{h=1}^{2^l} \left|\tilde{\pi}_{l,h} - \pi_{l,h}\right|\right)^2 \right] \\ & =
 E \left[ \left(   \sum_{h=1}^{2^s} \left|  \tilde{\pi}_{s,h} - \pi_{s,h}\right| + \sum_{l=s+1}^{\infty}\sum_{h=1}^{2^l} \pi_{s,h}  \right)^2 \right] \\ & \leq
 2 E \left[ \left(   \sum_{h=1}^{2^s} \left|  \tilde{\pi}_{s,h} - \pi_{s,h}\right|\right)^2 + \left(\sum_{l=s+1}^{\infty}\sum_{h=1}^{2^l} \pi_{s,h}  \right)^2 \right].
\end{align*}
We study separately the expecations of the two summands above. For each $h=1\dots,2^s$, $ \tilde{\pi}_{s,h} \geq \pi_{s,h}$, thus the fist expectation is
\begin{align*}
 E \left\{ \left(   \sum_{h=1}^{2^s} \tilde{\pi}_{s,h} - \pi_{s,h}\right)^2 \right\} & \leq  E \left\{ \left(   \sum_{h=1}^{2^s} \tilde{\pi}_{s,h}\right)^2 + \left(   \sum_{h=1}^{2^s}\pi_{s,h}\right)^2 \right\} \\
& \leq E  \left(   \sum_{h=1}^{2^s} \tilde{\pi}_{s,h} +   \sum_{h=1}^{2^s}\pi_{s,h}\right)\\
& = \left( \frac{a}{1+a}\right)^s +  \frac{1}{1+a} \left( \frac{a}{1+a}\right)^s,
\end{align*}
where the first inequality holds removing twice the cross product, and the second since the quantities are strictly less than one.
The second expectation is simply
\begin{align*}
E\left\{\left(\sum_{l=s+1}^{\infty}\sum_{h=1}^{2^l} \pi_{s,h}  \right)^2\right\} 
\leq E\left(\sum_{l=s+1}^{\infty}\sum_{h=1}^{2^l} \pi_{s,h}  \right) = \left( \frac{a}{1+a}\right)^{s+1}.
\end{align*}
It follows that the variance is less than $2\{a/(1+a)\}^s$, that concludes the proof.
\end{proof}

\begin{algorithm}
\caption{Gibbs sampler steps for posterior computation for multiscale hypothesis testing of group differences using msBP prior}
\begin{algorithmic}
\footnotesize
\FOR{ $j = 1, \dots, p$}
\STATE Compute the threes for the node allocation according to \eqref{eq:treetest}. 
\FOR{ $i = 1, \dots, n$}
\STATE assign observation $i$ at site $j$ to a cluster $(s_i, h_i)$ as in Algorithm~\ref{algo:postcluster} using the tree of weights of last step
\ENDFOR
\STATE compute $n_{s,h}$, $v_{s,h}$, and $r_{s,h}$;
\STATE compute $n^{(j)}_{s,h}$, $v^{(j)}_{s,h}$, and $r^{(j)}_{s,h}$ for $j=0,1$;
\STATE let $s_{\text{MAX}}$ the maximum occupied scale;
\FOR{ $s = 0, \dots, s_{\text{MAX}}$}
\FOR{ $h = 1, \dots, 2^s$}
\STATE update $S_{s,h} \sim \Be{1+n_{s,h}}{a + v_{s,h} - n_{s,h}}$,  $R_{s,h} \sim \Be{b+r_{s,h}}{b + v_{s,h} - n_{s,h} - r_{s,h} }$
\STATE update $S^{(0)}_{s,h} \sim \Be{1+n^{(0)}_{s,h}}{a + v^{(0)}_{s,h} - n^{(0)}_{s,h}}$,  $R^{(0)}_{s,h} \sim \Be{b+r^{(0)}_{s,h}}{b + v^{(0)}_{s,h} - n^{(0)}_{s,h} - r^{(0)}_{s,h} }$
\STATE update $S^{(1)}_{s,h} \sim \Be{1+n^{(1)}_{s,h}}{a + v^{(1)}_{s,h} - n^{(1)}_{s,h}}$,  $R^{(1)}_{s,h} \sim \Be{b+r^{(1)}_{s,h}}{b + v^{(1)}_{s,h} - n^{(1)}_{s,h} - r^{(1)}_{s,h} }$
\ENDFOR
\ENDFOR
\STATE compute the trees of weights under $H_0$ and $H_1$ for the two groups
\FOR{ $s = 0, \dots, s_{\text{MAX}}$}
\STATE compute $P_m^s = \mbox{pr}(H_0^s| \mathcal{N}^s_{(0)}, \mathcal{N}^s_{(1)})$ as in \eqref{eq:postH0}.
\ENDFOR
\ENDFOR
\STATE Draw $P_0^s  \sim \mbox{Be}(1 + \sum_{m=1}^M P_m^s, 1+M- \sum_{m=1}^M P_m^s)$
\end{algorithmic}
\label{algo:test}
\end{algorithm}


\bibliographystyle{apalike}
\bibliography{references}
\end{document}